\LetLtxMacro{\todom}{\todo}
\definecolor{codegreen}{rgb}{0,0.6,0}
\definecolor{codegray}{rgb}{0.5,0.5,0.5}
\definecolor{codepurple}{rgb}{0.58,0,0.82}
\definecolor{backcolour}{rgb}{0.95,0.95,0.92}
\ttfamily\color{blue!90!black},
\ttfamily\color{red!80!black},
\newcommand{\R}{\mathbb{R}}
\def\rxn{\mathop{\rightarrow}\limits}  % use as A+B \rxn^k C
\def\revrxn{\mathop{\rightleftharpoons}\limits}
\newcommand{\E}[1]{\mathrm{E} \hspace{-0.18em} \left[ #1 \right]}
\renewcommand{\paragraph}[1]{\noindent\emph{#1}}
\newcommand{\ppsim}{\texttt{ppsim}\xspace}
\newtheorem{theorem}{Theorem}
\begin{document}

\title{ppsim: A software package for efficiently simulating and visualizing population protocols\thanks{Supported by NSF award 1900931 and CAREER award 1844976.}}

\opt{lncs}{
    \titlerunning{ppsim: A software package for efficiently simulating population protocols}
    
    \authorrunning{D. Doty and E. Severson}
    
    \author{David Doty\inst{1} %\orcidID{0000-0002-3922-172X} 
    \and
    Eric Severson\inst{1} %\orcidID{???}
    }

    \institute{University of California, Davis, CA 95616, USA.
    \email{\{doty,eseverson\}@ucdavis.edu}, 
    % \email{eseverson@ucdavis.edu}, 
    % \url{https://web.cs.ucdavis.edu/~doty/}, 
    % \url{https://eric-severson.netlify.app/}
    }
}

\opt{article}{
    \date{}
    \author[*]{David Doty}
    \author[*]{Eric Severson}
    \affil[*]{University of California, Davis, CA 95616, USA.
        \protect\\
        \texttt{\{doty,eseverson\}@ucdavis.edu}
        \protect\\
        \url{https://web.cs.ucdavis.edu/\~doty/},
        \url{https://eric-severson.netlify.app/}
    }
}

\maketitle

% \vspace{-1cm}
\begin{abstract}
We introduce \ppsim~\cite{ppsim},
a software package for efficiently simulating 
population protocols,
a widely-studied subclass of chemical reaction networks (CRNs) in which all reactions have two reactants and two products.
Each step in the dynamics involves picking a uniform random pair from a population of $n$ molecules to collide and have a (potentially null) reaction.
In a recent breakthrough, Berenbrink, Hammer, Kaaser, Meyer, Penschuck, and Tran~\cite{berenbrink2021simulating}
discovered a population protocol simulation algorithm quadratically faster than the 
na\"{i}ve algorithm,
simulating $\Theta(\sqrt{n})$ reactions in \emph{constant} time 
% \todo{DD: it seems easy to misunderstand exactly how fast it is
% ES: This is not accurate. Table 1 in their paper gives times. Single batching is $O(\frac{\log n + q^2}{\sqrt{n}})$ per interaction. Multibatch is $O(\frac{q \sqrt{\log n}}{\sqrt{n}})$ per interaction.
% }
(independently of $n$, though the time scales with the number of species),
while preserving the \emph{exact} stochastic dynamics.

\ppsim implements this algorithm, with a tightly optimized Cython implementation that can exactly simulate hundreds of billions of reactions in seconds.
It dynamically switches to the CRN Gillespie algorithm for efficiency gains when the number of applicable reactions in a configuration becomes small.
As a Python library, \ppsim also includes many useful tools for data visualization in Jupyter notebooks,
allowing robust visualization of time dynamics such as histogram plots at time snapshots and averaging repeated trials.

Finally, we give a framework that takes any CRN with only bimolecular (2 reactant, 2 product) or unimolecular (1 reactant, 1 product) reactions, with arbitrary rate constants, and compiles it into a continuous-time population protocol. This lets \ppsim exactly sample from the chemical master equation (unlike approximate heuristics such as $\tau$-leaping or LNA), while achieving asymptotic gains in running time.
In linked Jupyter notebooks, we demonstrate the efficacy of the tool on some protocols of interest in molecular programming, including the approximate majority CRN and CRN models of DNA strand displacement reactions.
% \keywords{population protocol  \and chemical reaction network}
\end{abstract}

\opt{article}{
    % \todototoc
    % \listoftodos
}

\section{Introduction}
\label{sec:intro}

% \todo{ES: some of the last line, like the 'looking toward the engineering goal', since this paper is not looking toward that engineering goal. We could mention DSD here, but if it's a later example we could just mention it there. DD: I want to keep that since we do model DSDs in one example.}

A foundational model of chemistry used in natural sciences is that of chemical reaction networks (CRNs)~\cite{Gi77}: 
finite sets of reactions such as 
$A+B \rxn C+D$, representing that molecules $A$ and $B$, upon colliding, can change into $C$ and $D$.
This gives
a continuous time, discrete state, Markov process~\cite{Gi77} modelling discrete counts\footnote{
Another modelling choice are ODEs that describe real-valued concentrations, the ``mean-field'' approximation to the discrete behavior in the large scale limit~\cite{kurtz1972relationship}.
% The stochastic model converges to the deterministic model in the limit of large counts and volume~\cite{kurtz1972relationship},
% but only when counts are very large,
% and only for a limited time after which the trajectories of the two models may diverge.
}
of molecules.

Population protocols~\cite{AADFP06},
a widely-studied model of distributed computing with very limited agents,
are a restricted subset of CRNs 
(those with two reactants and two products in each reaction, and unit rate constants) 
that nevertheless capture many of the interesting features of CRNs.
Different terminology is used:
in reaction $A+B \rxn C+D$,
two \emph{agents} (molecules), 
whose \emph{states} (species types) are $A,B$,
have an \emph{interaction} (reaction), changing their states respectively to $C,D$.

\paragraph{Gillespie kinetics for CRNs.}
% \todo{DD: the next sentence refers to ``this stochastic model'' defined 2 paragraphs ago.}
The standard Gillespie algorithm~\cite{Gi77} simulates 
the Markov process mentioned above.
% For brevity we restrict attention to reactions that are unimolecular (one reactant) or bimolecular (two reactants).
% Let $\#A$ denote the count of species $A$ in the current configuration.
Given a fixed volume $v \in \R^+$, the \emph{propensity} of a unimolecular reaction 
$r : X \rxn^k \ldots$ is 
$\rho(r) = k \cdot \# X$, where $\# X$ is the count of $X$.
The propensity of a bimolecular reaction 
$r : X + Y \rxn^k \ldots$ is 
$\rho(r) = k \cdot \frac{\# X \cdot \# Y}{v}$ if $X \neq Y$ and 
% \todo{DD: David S and I always get in debates over whether there should be a $1/2$ factor here, and I can never remember the resolution.
% ES: Yes this factor of 1/2 is correct, unless you want the first rate to be $2\cdot \#X \cdot \#Y$ (the number of ordered pairs).
% If we want this definition to match our notation of time (with $v = n$), then we need to define parallel time as $n/2$ steps though.
% The other choice is scaling these rates all by 2, but I think this latter choice is worse.
% Parallel time being $n/2$ steps sounds awkward at first but has the upshot of being the expected time for an agent / molecule to have a single interaction (as we saw was nice for symmetric counter protocols).
% }
$k \cdot \frac{\# X \cdot (\# X - 1)}{2v}$ otherwise.
% The propensity function determines the evolution of the system as follows.
% The time until the next reaction occurs is an exponential random variable with rate $\rho = \sum_{r \in R} \rho(r)$.
% (note that $\rho=0$ if no reactions are applicable).
% The probability that next reaction will be a particular $r_{\text{next}}$ is $\rho(r_{\text{next}}) / \rho$.
% \todo{DD: perhaps we should replace the previous few sentences with the following so as not to be repetitive.}
The Gillespie algorithm calculates the sum of the propensities of all reactions: $\rho = \sum_{r} \rho(r)$. 
The time until the next reaction is sampled as
an exponential random variable $T$ with rate $\rho$, and a reaction $r_\text{next}$ is chosen with probability $\rho(r_{\text{next}}) / \rho$ to be applied.

% The kinetic model is based on the physical assumption of well-mixedness that is valid in a dilute solution.
% Thus, we assume the \emph{finite density constraint}, which stipulates that a volume required to execute a CRN must be proportional to the maximum molecular count obtained during execution~\cite{SolCooWinBru08}.
% In other words, the total concentration (molecular count per volume) is bounded.

\paragraph{Population protocols.}
The population protocols model comes with simpler dynamics. At each step, a scheduler chooses a random pair of agents (molecules) to interact in a (potentially null) reaction. The discrete time model counts each interaction as $\frac{1}{n}$ units of time, where $n$ is the population size.
A continuous time variant~\cite{fanti2020communication} gives each agent a rate-1 Poisson clock, upon which it interacts with a randomly chosen other agent. 
The expected time until the next interaction is $\frac{1}{n}$, so up to a re-scaling of time, which by straightforward Chernoff bounds is negligible, these two models are equivalent. \ppsim can use either time model.

There is an important efficiency difference between the algorithms: 
the Gillespie algorithm automatically skips null reactions.
For example, a reaction such as $L+L \rxn L+F$, when $\# L = 2$ and $\# F = n-2$, is much more efficient in the Gillespie algorithm,
which simply increments the time until the $L+L \to L+F$ reaction by an exponential random variable in one step.
A na\"{i}ve population protocol simulation iterates through $\Theta(n)$
expected null interactions 
($L+F \to L+F$ and $F+F \to F+F$) until the two $L$'s react.
% \footnote{
%     The expected number of interactions is $\binom{n}{2}$, 
%     compared to time in Gillespie kinetics in volume $v$, so \ppsim uses a conversion factor of $\binom{n}{2}/{v}$ to simulate CRNs.
% }
To better handle cases like this, \ppsim dynamically switches to the Gillespie algorithm when the number of null interactions is sufficiently large;
see documentation~\cite{ppsim} for implementation details.

% When restricted to population protocols with population size $n$, the Gillespie model is nearly equivalent to the standard protocol simulation model, assuming the volume $v=n$.
% Instead of representing a configuration by integer counts of states, represent it as an array of size $n$ listing each agent's state, e.g., a configuration of 4 $A$ and 2 $C$ is represented as some permutation of $[A,A,A,A,C,C]$.
% Pick a pair of agents uniformly at random to interact and update their states, e.g., if we have a reaction $A+C \rxn X+Y$, the next configuration might be $[A,X,A,A,C,Y]$.
% Add $1/n$ to the recorded time.
% This is done even in the case that the interaction is \emph{null}: i.e., there is no reaction with the two picked states as reactants.
% The difference with the Gillespie model is that time is discrete: exactly $1/n$ units of time elapse in an interaction, whereas the Gillespie model would sample an exponential random variable with expected value $1/n$.
% However, by straightforward Chernoff bounds,
% the time recorded by Gillespie is likely to be very close to the discrete time model.
% Furthermore, by sampling a Poisson random variable modeling the number of continuous-time reactions occurring in a given amount of time, \ppsim can optionally implement the exact continuous-time model.

\paragraph{Other simulation algorithms.}
Variants of the Gillespie algorithm reduce the time to apply a single reaction from $O(|R|)$ to $O(\log|R|)$~\cite{gibson2000efficient} or $O(1)$~\cite{slepoy2008constant}, where $|R|$ is the number of types of reactions. 
However, the time to apply $n$ reactions still scales with $n$.
A common speedup heuristic for simulating $\omega(1)$ reactions in $O(1)$ time is \emph{$\tau$-leaping}~\cite{gillespie2001approximate, rathinam2007reversible, cao2006efficient, gillespie2007stochastic, soloveichik2009robust},
which ``leaps'' ahead by time $\tau$, by assuming
reaction propensities will not change
and updating counts in a single batch step by sampling according these propensities.
Such methods necessarily approximate the kinetics inexactly, though it is possible in some cases to prove bounds on the approximation accuracy~\cite{soloveichik2009robust}.
% \todo{DD: I'm guessing a bit here. LNA actually seems to be more ambitious and approximate the chemical master equation, rather than sampling one single trajectory, but it is still is only an approximation of that, so using it to sample a trajectory would also be approximated dynamics.}
Linear noise approximation (LNA)~\cite{cardelli2016stochastic} can be used to approximate the discrete kinetics, by adding stochastic noise to an ODE approximation.
A speedup heuristic for population protocol simulation is to sample the number of each interaction that would result from a random matching of size $m$, and update species counts in a single step.
This, too, is an inexact approximation: unlike the true process, it prevents any molecule from participating in more than one of the next $m$ interactions.

The algorithm implemented by \ppsim,
due to 
Berenbrink, Hammer, Kaaser, Meyer, Penschuck, and Tran~\cite{berenbrink2021simulating}, 
builds on this last heuristic.
Conditioned on the event that no molecule is picked twice during the next $m$ interactions, these interacting pairs are a random disjoint matching of the molecules.
Define the random variable $C$ as the number of interactions until the same molecule is picked twice.
Their basic algorithm samples this collision length $C$ according to its exact distribution, 
then updates counts in batch assuming all pairs of interacting molecules are disjoint until this collision,
% on a random matching of size $m = C-1$,
and finally simulates the interaction involving the collision.
By the Birthday Paradox, $\E{C} \approx \sqrt{n}$ in a population of $n$ molecules,
giving a quadratic factor speedup over the na\"{i}ve algorithm.
The time to update a batch scales quadratically with $q$, the total number of states. The ``multibatch'' variant, used by \ppsim, samples multiple successive collisions to process an even larger batch, and uses $O\bigg(q \sqrt{\frac{\log n}{n}}\bigg)$ time per simulated interaction.
% \todo{DD: not sure how precise we want to be about batching versus multibatching here. But Erik was misled about exactly how fast it is when we just called it ``constant'', so I want to emphasize it matters how many species there are.
% ES: It'd take a bit more space to give an accurate description here. Time to sample the distribution is $O(\log n)$, but the more important cost is $O(q^2)$ for processing a batch. 

% Then multibatch should (asymptotically) sample $\Theta(\frac{q^2}{\log n})$ collisions because this will spend a balanced amount $\Theta(q^2)$ of time for sampling collisions as for batch processing. Then the size of the batch will be $\sqrt{n \cdot \frac{q^2}{\log n}}$, which ends up giving $O(\frac{q \sqrt{\log n}}{\sqrt{n}})$ per interaction. But also this theoretical bound needs $\omega(\sqrt{\log n}) \leq q \leq o(\sqrt{n\log n})$ as per Theorem 10 in their paper.
% }
See~\cite{berenbrink2021simulating} for  details.
An advantage of such a fast simulator, specifically for population protocols implementing \emph{algorithms}, 
is that the very large population sizes it can handle (over $10^{12}$) allow one to tell the difference (on a log-scale plot of convergence time)
between a protocol converging in time $O(\log n)$ versus, say, $O(\log^2 n)$.

\section{Usage of the ppsim tool}
\label{sec:usage}

We direct the reader to~\cite{ppsim} for detailed installation, usage instructions, and examples.
Here we highlight basic usage examples for specifying protocols.

There are three ways one can specify a population protocol,
each best suited for different contexts.
% We demonstrate these by example using the \emph{approximate majority} protocol, which has been studied theoretically~\cite{AAE08-2, condon2020approximate} and implemented experimentally with DNA~\cite{chen2013programmable}.
% This protocol starts from some counts of species $A$ and $B$ and, assuming 
% the initial majority is sufficiently largely than the minority,
% with high probability overtakes the entire population in $O(\log n)$ time.
% The reactions are $A+B \rxn 2U$, \quad $A+U \rxn 2A$, 
% \quad $B+U \rxn 2B$.
% a sufficiently large gap $|\# A - \# B| = \Omega(\sqrt{n \log n})$, 
% converts the entire population to the initial majority with high probability.
The most direct specification of a protocol directly encodes the mapping of input state pairs to output state pairs using a Python \lstinline{dict} (the following is the well-studied \emph{approximate majority} protocol, which has been studied theoretically~\cite{AAE08-2, condon2020approximate} and implemented experimentally with DNA~\cite{chen2013programmable}):

% am = {('A','B'):('U','U'), ('A','U'):('A','A'), ('U','B'):('B','B')}
\begin{lstlisting}
a,b,u = 'A','B','U'
approx_majority = {(a,b):(u,u), (a,u):(a,a), (b,u):(b,b)}
\end{lstlisting}

More complex protocols with many possible species are often specified in pseudocode instead of listing all possible reactions.
\ppsim supports this by allowing the \emph{transition function} mapping input states to output states to be computed by a Python function.
% To allow asymmetric transition functions, the first agent picked is called the ``sender'' and the second the ``receiver''.
The following allows species to be integers and computes an integer average of the two reactants:

\begin{lstlisting}
def discrete_averaging(s: int, r: int):
    return math.floor((s+r)/2), math.ceil((s+r)/2)
\end{lstlisting}

% Note these Python objects and functions 
States and transition rules are converted to integer arrays for internal Cython methods, so there is no efficiency loss for the ease of representing protocol rules, since a Python function defining the transition function is not called during the simulation: producible states are enumerated before starting the simulation.

For complicated protocols,
an advantage of \ppsim over standard CRN simulators is the ability to represent species/states as Python objects with different fields (as they are often represented in pseudocode), and to plot counts of agents based on their field values.\footnote{
    Download and run
    \url{https://github.com/UC-Davis-molecular-computing/ppsim/blob/main/examples/majority.ipynb} to visualize 
    such
    large state 
    protocols.
}

Finally, protocols can be specified using CRN-like notation for CRNs with reactions that are bimolecular (2-input, 2-output) or unimolecular (1-input, 1-output), with arbitrary rate constants.
% ; the following is another way to specify the approximate majority protocol, this time with the middle rate constant equal to 3, and the last reaction made reversible with forward rate constant 4 and reverse rate constant 5:
For instance, the CRN
$
\qquad
A + B \revrxn^{0.5}_4 2C, 
\quad\quad 
C \rxn^5 D
\qquad
$
is specified by the code
\begin{lstlisting}
a,b,c,d = species('A B C D')
crn = [(a+b | 2*c).k(0.5).r(4), (c >> d).k(5)]
\end{lstlisting}

This will then get compiled into a continuous time population protocol that samples the same distribution as Gillespie. 
\opt{article}{
    See Section~\ref{sec:full-specification} for details.
}

% \begin{figure}
% \includegraphics[width=\textwidth]{figures/fig1.eps}
% \caption{A figure caption is always placed below the illustration.
% Please note that short captions are centered, while long ones are
% justified by the macro package automatically.} \label{fig1}
% \end{figure}

Any of the three specifications (\lstinline{dict}, Python function, or list of CRN reactions) can be passed to the \lstinline{Simulation} constructor.
The \lstinline{Simulation} can be run to generate a history of sampled configurations.
% counts sampled at regular intervals (a Pandas dataframe used with plotting libraries such as matplotlib and seaborn):

\begin{lstlisting}
init_config = {a: 51, b: 49}
sim = Simulation(init_config, approx_majority)
sim.run(16, 0.1)   # 160 samples up to time 16
sim.history.plot() # Pandas dataframe with counts
\end{lstlisting}

This would produce the plot shown in Fig~\ref{fig: am1}. When the input is a CRN, \ppsim defaults to continuous time and produces the exact same distributions as the Gillespie algorithm. Fig~\ref{fig: am2} shows a test against the package GillesPy2~\cite{gillespy2} to confirm they sample the same distribution.

\begin{figure}[ht]
    \opt{lncs}{\vspace{-0.5cm}}
    \centering
     \begin{subfigure}[b]{0.46\textwidth}
         \centering
         \includegraphics[width=\textwidth]{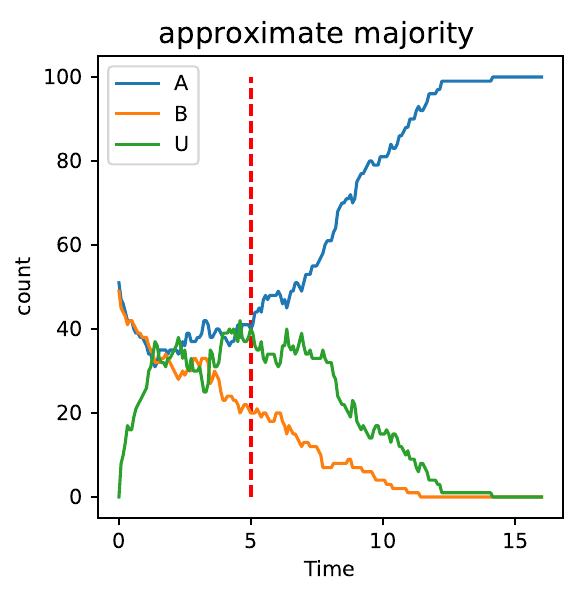}
         \vspace{-0.3cm}
         \caption{\footnotesize Plot of sim.history.}
         \label{fig: am1}
     \end{subfigure}
     \hfill
     \begin{subfigure}[b]{0.52\textwidth}
         \centering
         \includegraphics[width=\textwidth]{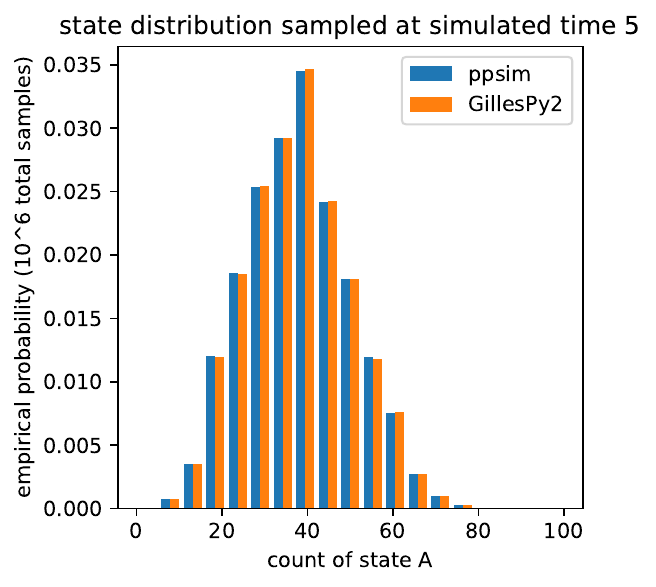}
         \vspace{-0.3cm}
         \caption{\footnotesize Comparison with Gillespie algorithm.}
         \label{fig: am2}
     \end{subfigure}
    \opt{lncs}{\vspace{-0.2cm}}
    \caption{ \opt{lncs}{\footnotesize}
    Time 5 (dotted line in Fig~\ref{fig: am1}) was sampled $10^6$ times with \ppsim and GillesPy2 to verify they both sample the same chemical master equation distribution (Fig~\ref{fig: am2}).
    }
    \label{fig:am-comparison}
    \opt{lncs}{\vspace{-0.2cm}}
\end{figure}

\section{Speed comparison with other CRN simulators}
\label{sec:comparison}

We ran speed comparisons of \ppsim against both GillesPy2~\cite{gillespy2} and StochKit2 \cite{stochkit2},
the latter being the fastest option we found for Gillespie simulation. Fig~\ref{fig:am-runtimes} shows that \ppsim is able to reach significantly larger population sizes.
Other tests shown in an example notebook\footnote{
    \url{https://github.com/UC-Davis-molecular-computing/ppsim/blob/main/examples/crn.ipynb}
    shows further plots and explanations.
} 
%using rules parameterized by state size, 
show how each package scales with the number of species and reactions.

% \url{https://observablehq.com/@pallada-92/chemical-reaction-network-simulator}

% \url{https://journals.plos.org/plosone/article?id=10.1371/journal.pone.0146732} (CERENA; probably don't need to run this, but the paper has a nice list of other stochastic simulators)

% \url{https://classicdsd.azurewebsites.net/} (Visual DSD)

% \url{https://crnt.osu.edu/CRNTWin}

% and of course  \url{https://peregrine.model.in.tum.de/}

\begin{figure}[ht]
    \opt{lncs}{\vspace{-0.6cm}}
    \centering
        \opt{lncs}{
            \includegraphics[width=0.6\textwidth]{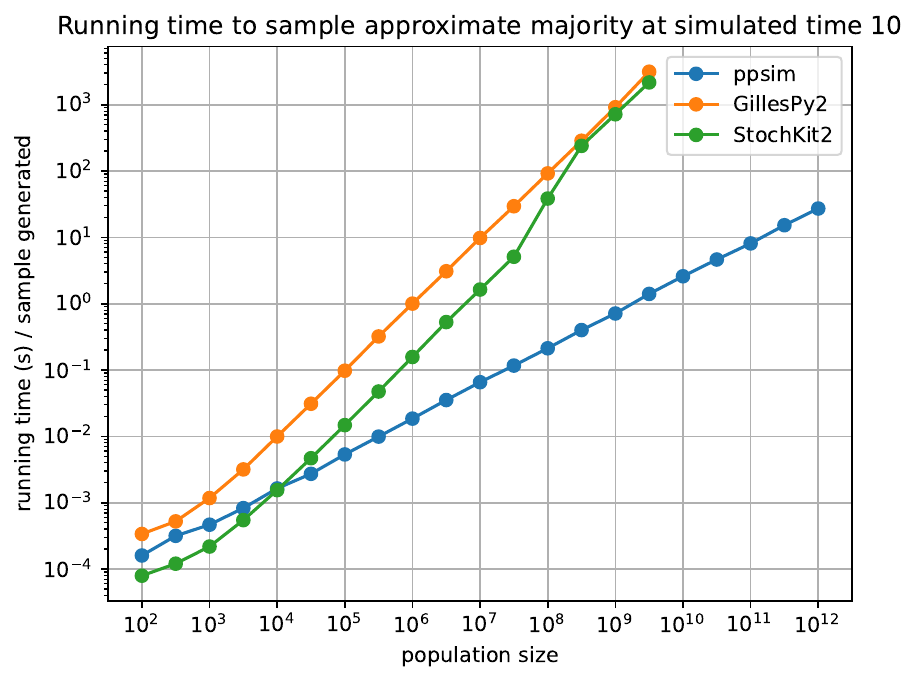}
        }
        \opt{article}{
            \includegraphics[width=0.8\textwidth]{figures/am3.pdf}
        }
    \opt{lncs}{\vspace{-0.4cm}}
    \caption{Comparing runtime with population size n shows $O(n)$ scaling for Gillespie (slope 1 on log-log plot) versus $O(\sqrt{n})$ scaling for \ppsim (slope 1/2).}
    \label{fig:am-runtimes}
    \opt{lncs}{\vspace{-0.9cm}}
\end{figure}

\section{Issues with other speedup methods}
\label{sec:issues}

\begin{figure}[!ht]
    \opt{lncs}{\vspace{-0.2cm}}
    \centering
    \begin{subfigure}[b]{0.24\textwidth}
         \centering
         \includegraphics[width=\textwidth]{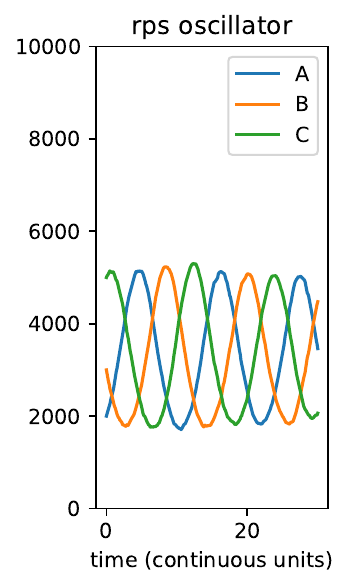}
         \caption{Short timescale oscillations.}
         \label{fig:rps-a}
     \end{subfigure}
    \begin{subfigure}[b]{0.74\textwidth}
         \centering
         \includegraphics[width=\textwidth]{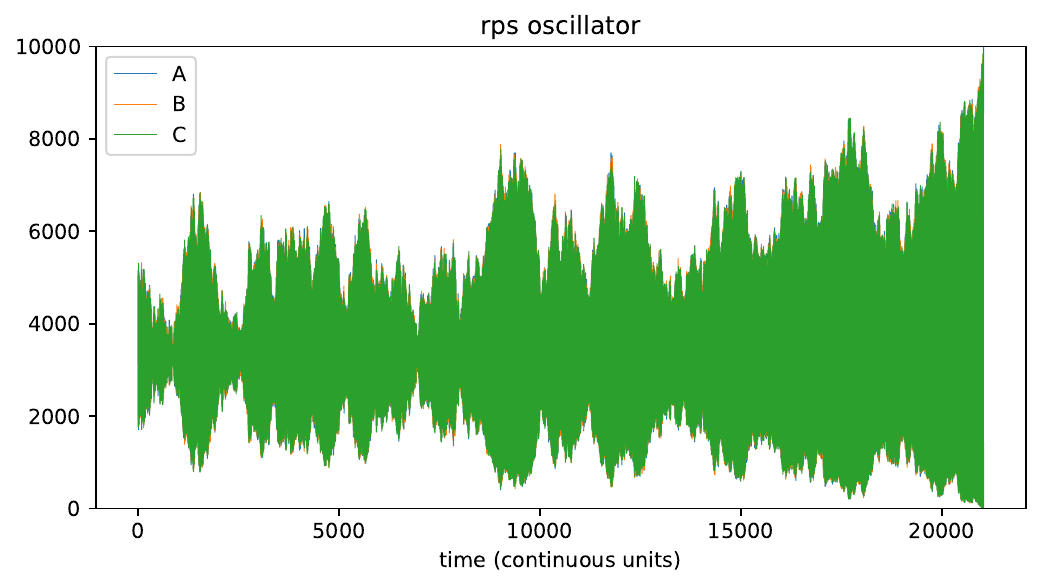}
         \caption{Over a long $\Theta(n)$ timescale, the varying amplitudes will cause two species to go extinct.}
         \label{fig:rps-b}
     \end{subfigure}
     \begin{subfigure}[b]{0.45\textwidth}
         \centering
         \includegraphics[width=\textwidth]{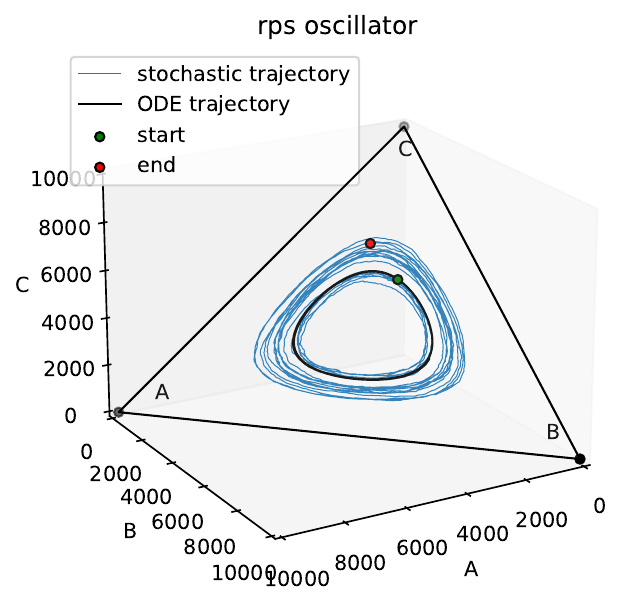}
         \caption{Dynamics from Figs~\ref{fig:rps-a}, \ref{fig:rps-b} in phase space. The ODE solution has a neutrally stable orbit.}
         \label{fig:rps-c}
     \end{subfigure}
     \hfill
     \begin{subfigure}[b]{0.45\textwidth}
         \centering
         \includegraphics[width=\textwidth]{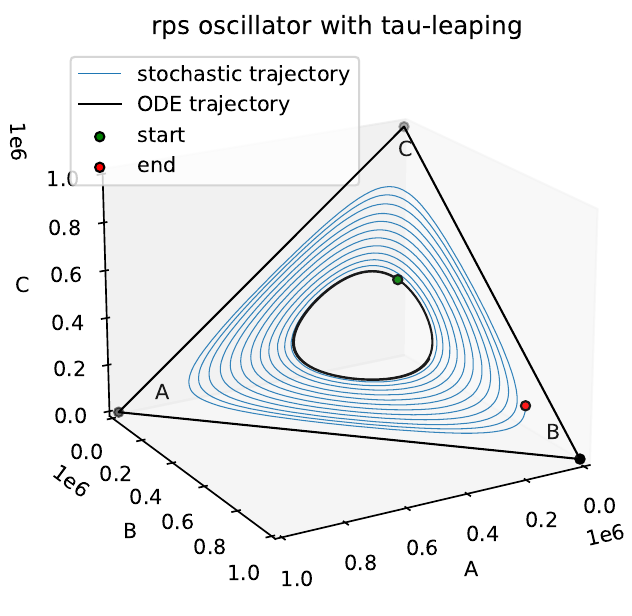}
         \caption{$\tau$-leaping adds a consistent outward drift that will lead to extinction on a much shorter timescale.}
         \label{fig:rps-d}
     \end{subfigure}
    \opt{lncs}{\vspace{-0.2cm}}
    \caption{ \opt{lncs}{\footnotesize}
    The rock-paper-scissors oscillator has qualitative dynamics missed by both ODE simulation (never goes extinct) and $\tau$-leaping (too quickly goes extinct).
    }
    \label{fig:rps}
    \opt{lncs}{\vspace{-0.2cm}}
\end{figure}

It is reasonable to conjecture that exact stochastic simulation of large-count systems is unnecessary, since
Gillespie is fast enough on small-count systems,
and faster ODE approximation is ``reasonably accurate'' for large-count systems.
However, there are example
large count systems with stochastic effects not observed in ODE simulation,
and where $\tau$-leaping introduces systematic inaccuracies that disrupt the fundamental qualitative behavior of the system,
demonstrating the need for exact stochastic simulation.
A simple such example is the 3-state rock-paper-scissors oscillator: 
% $B+A \to 2B,\; C+B \to 2C,\; A+C \to 2A.$
\begin{lstlisting}
rps = [b+a >> 2*b, c+b >> 2*c, a+c >> 2*a]
\end{lstlisting}
Figure~\ref{fig:rps} compares exact simulation of this CRN to $\tau$-leaping and ODEs.

The population protocol literature furnishes more examples, with problems such as leader election~\cite{alistarh2015polylogarithmic, LeaderElectionDIST, berenbrink2018simple, bilke2017brief, elsasser2018recent, gasieniec2019almost, GS18, berenbrink2020optimal, sudo2020logarithmic, sudo2020leader, burman2021self} and single-molecule detection~\cite{alistarh2017robust, dudek2018universal},\footnote{
    Download and run \url{https://github.com/UC-Davis-molecular-computing/ppsim/blob/main/examples/rps_oscillator.ipynb} to see visualizations of the generalized 7-state rps oscillator used for single-molecule detection in \cite{dudek2018universal}.
}
that crucially use small counts in a very large population, a regime not modelled correctly by ODEs.
See also~\cite{lathrop2020population} for examples of CRNs with qualitative stochastic behavior not captured by ODEs,
yet that behavior appears only in population sizes too large to simulate with Gillespie.
\section{Conclusion}
\label{sec:conclusion}

Unfortunately, the algorithm of Berenbrink et al.~\cite{berenbrink2021simulating} implemented by \ppsim seems inherently suited to population protocols,
not more general CRNs.
% Randomized transitions allow \ppsim to handle CRN rate constants, as well as 1-reactant, 1-product reactions such as $A \to B$,
% through a simple technique of simulating these with 2-reactant, 2-product reactions (with appropriate probability adjustment).
% However, even a simple generalization,
For instance, reversible dimerization reactions $A + B \revrxn C$ 
(used, for example, in~\cite{srinivas2017enzyme} to model toehold occlusion reactions in DNA systems)
seem beyond the reach of the batching technique of~\cite{berenbrink2021simulating}.
Although such reactions can be \emph{approximated} by $A+B \revrxn C+F$ for some anonymous ``fuel'' species $F$, the count of $F$ influences the rate of the reverse reaction $F+C \to A+B$, with a different rate than $C \to A+B$.
% Any expansion of the range of CRNs that can be simulated by the fast batching technique would be a significant improvement to using the algorithm to model natural and engineered CRNs of interest.

Another area for improvement 
is the handling of null reactions.
% For efficiency \ppsim switches to the standard Gillespie algorithm 
% % (which simulates only a single non-null reaction at at time) 
% whenever all ($\approx \sqrt{n}$) reactions in a batch are null.
% However, 
There could be a way to more deeply intertwine the logic of the Gillespie and batching algorithms,
to gain the simultaneous benefits of each, skipping the null reactions
while simulating many non-null reactions in batch.

% \opt{article}{\clearpage}
\bibliographystyle{splncs04}
\bibliography{refs}

\opt{article}{
    \clearpage
    \appendix
    % \section{Proof of Exact Chemical Master Equation Sampling}
\section{Full specification of compilation of CRN to population protocol}
\label{sec:full-specification}

% \paragraph{Full specification of compilation of CRN to population protocol.}
It is possible to specify 1-reactant/1-product reactions such as $A \to B$,
which are compiled into 2-reactant/2-product reactions $A+C \to B+C$ for every species $C$,
with reaction rates adjusted appropriately.
The full transformation is described in the proof of~\cref{thm:sample}.
Here, we give an example of the transformation on the CRN

\begin{align*}
    2A &\revrxn^3_2 B+C
    \\
    C &\rxn^1 D
\end{align*}

First, each reversible reaction is turned into two irreversible reactions:

\begin{align*}
    2A &\rxn^3 B+C
    \\
    B+C &\rxn^2 2A
    \\
    C &\rxn^1 D
\end{align*}

For each non-symmetric bimolecular reaction (with two unequal reactants), add its ``swapped'' reaction reversing the order of reactants and the order of products.
From now on we write reactions using ordered pair notation (e.g., $(A,A) \rxn (B,C)$ instead of $2A \rxn B+C$).

\begin{align*}
    (A,A) &\rxn^3 (B,C)
    \\
    (B,C) &\rxn^2 (A,A)
    \\
    (C,B) &\rxn^2 (A,A)
    \\
    C &\rxn^1 D
\end{align*}

Each (originally) bimolecular reaction (not the result of converting a unimolecular reaction below) has its rates multiplied by the corrective factor $(n-1) / (2 \cdot v)$,
where $n$ is the population size and $v$ is the volume.
We choose $n=v=10$ for this example,
so $(n-1) / (2 \cdot v) = 0.45$.
(See proof of~\cref{thm:sample} for explanation of correction factor.)

\begin{align*}
    (A,A) &\rxn^{1.35} (B,C)
    \\
    (B,C) &\rxn^{0.9} (A,A)
    \\
    (C,B) &\rxn^{0.9} (A,A)
    \\
    C &\rxn^1 D
\end{align*}

Each unimolecular reaction is converted to several bimolecular reactions with all other species in the CRN.

\begin{align*}
    (A,A) &\rxn^{1.35} (B,C)
    \\
    (B,C) &\rxn^{0.9} (A,A)
    \\
    (C,B) &\rxn^{0.9} (A,A)
    \\
    (C,A) &\rxn^1 (D,A)
    \\
    (C,B) &\rxn^1 (D,B)
    \\
    (C,C) &\rxn^1 (D,C)
    \\
    (C,D) &\rxn^1 (D,D)
\end{align*}

Finally, for each ordered pair of input states $(x,y)$,
sum the rates of all reactions that have ordered reactants $(x,y)$, and we let $m$ be the maximum value of this sum over all ordered pairs of reactants.
In this example, the pair $(C,B)$ has rates $0.9$ and $1$ for its two reactions, 
whose sum achieves the maximum $m = 1.9$.
Divide rates by $m$ to convert them to probabilities.

This gives us the final randomized transitions of the population protocol.
Below, whenever the probabilities for a given input state pair $(x,y)$ sum to a value $p < 1$, 
implicitly the transition on input $(x,y)$ is null 
(i.e., outputs $(x,y)$)
with probability $1-p$.

\begin{align*}
    (A,A)&: (B,C)
    \text{ with probability }
    1.35 / 1.9
    \\
    (B,C)&: (A,A)
    \text{ with probability }
    0.9 / 1.9
    \\
    (C,B)&: \{(A,A)
    \text{ with probability }
    0.9 / 1.9
    ,\quad
    (D,B)
    \text{ with probability }
    1 / 1.9
    \}
    \\
    (C,A)&: (D,A)
    \text{ with probability }
    1 / 1.9
    \\
    (C,C)&: (D,C)
    \text{ with probability }
    1 / 1.9
    \\
    (C,D)&: (D,D)
    \text{ with probability }
    1 / 1.9
\end{align*}

Time is now scaled by $m = 1.9$. Thus in one unit of time, there should be an expected $1.9 \cdot n$ interactions. In order to simulate $t$ units of time, we choose a Poisson random variable with mean $1.9 \cdot n \cdot t$ to get the number of interactions to simulate.

% \begin{enumerate}
%     \item 
%     Each reversible reaction $A+B \revrxn C+D$ is turned into two irreversible reactions,
%     $A+B \to C+D$ and $C+D \to A+B$,
%     and similarly $A \revrxn B$ is turned into $A \to B$ and $B \to A$.
    
%     \item
%     For each non-symmetric bimolecular reaction $A+B \to C+D$ (i.e., where $A \neq B$),
%     add its ``swapped'' reaction $B+A \to D+C$.
    
%     \item
%     Each originally bimolecular reaction (not the resulting of converting a unimolecular reaction below) has its rates multiplied by the corrective factor $(n-1) / (2 \cdot v)$,
%     where $n$ is the population size and $v$ is the volume.
%     (See proof of~\cref{thm:sample} for explanation of correction factor.)
    
%     \item
%     Each unimolecular reaction $A \to B$ is converted to several bimolecular reactions $A+C \to B+C$, where $C$ ranges over all species in the CRN.
%     For example, if the set of species is $\{W,X,Y,Z\}$,
%     then $X \to Z$ is converted to 
%     $X+W \to Z+W$,
%     $X+X \to Z+X$,
%     $X+Y \to Z+Y$,
%     $X+Z \to Z+Z$.
%     \todo{DD: TODO: explain rates}
    
%     \item
%     The above gives a set of reactions with ordered pairs of reactants $(a,b)$.
%     For each state pair $(a,b)$,
%     we sum the rates of all reactions that have ordered reactants $(a,b)$, and we let $m$ be the maximum value of this sum over all ordered pairs of reactants.
%     The transition probabilities
% \end{enumerate}

The following theorem shows that the above transformation results in a population protocol whose continuous time dynamics exactly sample from the same distribution as the Gillespie stochastic model applied to the original CRN.

\begin{theorem}
\label{thm:sample}
Let $\mathcal{C}$ be a CRN consisting of only unimolecular reactions $r_i: X_{i_1} \rxn^{k_i} X_{i_2}$ and bimolecular reactions $r_j: X_{j_1} + X_{j_2} \rxn^{k_j} X_{j_3} + X_{j_4}$.

Then for any initial configuration $\mathcal{I} = \{a_1 X_1, \ldots, a_s X_s\}$ and fixed volume $v \in \R^+$, there exists an equivalent continuous time population protocol $\mathcal{P}$ with time scaling constant $m$. For any time $t\in \R^+$, the distribution over all possible configurations sampled by the Gillespie algorithm at time $t$ is the same distribution as configurations of $\mathcal{P}$ at time $m\cdot t$.
\end{theorem}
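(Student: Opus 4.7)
The plan is to show that the transformed population protocol $\mathcal{P}$, viewed as a continuous-time Markov chain on configuration space, has the same jump rates as the Gillespie CTMC once PP time is rescaled by the factor $m$. Since both processes start in $\mathcal{I}$, and a CTMC is determined by its rate matrix, matching the jump rates at every configuration implies equality of the Gillespie distribution at time $t$ and the distribution of $\mathcal{P}$ at time $m\cdot t$. Before matching rates, I would verify that the construction is well defined: each reaction contributes probability $\text{rate}/m$ to its ordered input pair, and the choice $m = \max_{(x,y)} \sum_{r \text{ on } (x,y)} \text{rate}(r)$ ensures that the total probability on every input pair is at most one, with any slack assigned to the null output $(x,y)$.

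The core argument is a rate computation, carried out case by case for the three reaction classes. In the base continuous-time PP (rate-$1$ agent clocks), the ordered pair $(X,Y)$ with $X \neq Y$ arises at rate $\#X \cdot \#Y/(n-1)$, while $(X,X)$ arises at rate $\#X(\#X-1)/(n-1)$; multiplying by the assigned transition probability and then by $m$ (to pass to Gillespie time) should reproduce the Gillespie propensity. For an asymmetric bimolecular reaction $X+Y \to \cdots$ with rate constant $k$, both the original transition $(X,Y)$ and its added swap $(Y,X)$ contribute, each with probability $k(n-1)/(2vm)$, summing to $k\,\#X\,\#Y/v$. For symmetric $2X \to \cdots$, only $(X,X)$ contributes, and the factor $(n-1)/(2v)$ combines with $\#X(\#X-1)/(n-1)$ to give $k\,\#X(\#X-1)/(2v)$. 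For unimolecular $X \to Y$ with rate $k$, the transformation attaches $(X,C') \to (Y,C')$ with probability $k/m$ for every species $C'$; summing over $C'$ uses the identity $(\#X - 1) + \sum_{C' \neq X} \#C' = n-1$ to collapse the total to $k\,\#X$.

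The main obstacle is essentially careful bookkeeping rather than any deep technical difficulty: tracking the doubling introduced by the swapped-pair convention for asymmetric bimolecular reactions (absorbed by the factor of $2$ in the denominator of the correction $(n-1)/(2v)$), and the sum over all partner species for unimolecular reactions (which telescopes only because the $(X,X)$ case contributes $\#X(\#X-1)/(n-1)$ rather than $\#X^{2}/(n-1)$, thereby matching the self-exclusion built into the Gillespie formula). Once the three rate identities are verified, equivalence of the two CTMCs follows from standard uniqueness of a continuous-time Markov chain given its initial distribution and rate matrix, yielding the claimed distributional equality at the rescaled times.
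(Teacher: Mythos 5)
Your proposal is correct and takes essentially the same route as the paper: the same ordered-transition construction with correction factor $(n-1)/(2v)$ for bimolecular reactions, the expansion of unimolecular reactions over all partner species, normalization by $m$, and time rescaling by $m$, verified by a case-by-case rate match. The only difference is presentational --- you aggregate to configuration-level propensities and invoke uniqueness of the CTMC explicitly, whereas the paper matches the per-agent-pair Poisson rates directly and leaves that final step implicit.
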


\begin{proof}
The continuous time population protocol $\mathcal{P}$ will use the same state set $\{X_1, \ldots, X_s\}$ and initial configuration $\mathcal{I}$. In the population protocol dynamics, each agent has a rate 1 Poisson process for the event where they interact with a randomly chosen agent. Thus for each ordered pair of agents, that pair meets in that order as a rate $\frac{1}{n-1}$ Poisson process.

After converting all reactions, we will be left with a set of 
ordered transitions with rates, of the form $(a,b)\rxn^k (c,d)$. 
An ordinary population protocol transition should correspond to rate $k=1$, so
for each ordered pair of agents $(v_1, v_2)$ in states $(a,b)$, this transition should happen as a Poisson process with rate $\frac{k}{n-1}$.
We must handle the fact that these rates could exceed 1, and also there could be multiple ordered transitions starting from the same pair $(a,b)$.
Define $m$ to be the maximum over all ordered pairs $(a,b)$ of the sum of the rates of any ordered transitions with pair $(a,b)$ on the left. The population protocol transition rule for a pair $(a,b)$ is then a randomized rule, where each ordered reaction $(a,b)\rxn^k (c,d)$ happens with probability $\frac{k}{m}$ (and otherwise the transition is null). Because we are also scaling time by this factor $m$, it follows that the rate of this ordered transition between a single pair of agents $(v_1, v_2)$ in states $(a,b)$ will be $\frac{m}{n-1} \cdot \frac{k}{m} = \frac{k}{n-1}$, as desired.

Next we show how each unimolecular
 reaction
$r_i: X_{i_1} \rxn^{k_i} X_{i_2}$
is converted to a set of ordered transitions with rates. For each $j=1,\ldots,s$, we add the ordered transition $(X_{i_1}, X_j)\rxn^{k_i}(X_{i_2},X_j)$. In other words, the first agent in the pair $v_1$, independent of the state of the other agent, will change state from $X_{i_1}$ to $X_{i_2}$. This agent $v_1$ gets chosen as the first agent in the pair as a Poisson process with rate $m$, and this unimolecular transition will happen (independent of the state of the other agent) with probability $\frac{k_i}{m}$. Thus each agent in state $X_{i_1}$ changes to state $X_{i_2}$ as a rate $k_i$ Poisson process, which is exactly the model simulated by the Gillespie algorithm.

Finally, we show how each bimolecular reaction 
$r_j: X_{j_1} + X_{j_2} \rxn^{k_j} X_{j_3} + X_{j_4}$
is converted to a set of ordered transitions with rates. In the Gillespie model, for each unordered pair $\{v_1, v_2\}$ of agents in states $X_{j_1}$ and $X_{j_2}$, 
% this reaction 
% happens as a Poisson process 
the time until this reaction happens is an exponential random variable
with rate $\frac{k_j}{v}$, where $v \in \R^+$ is the volume. 
In our protocol $\mathcal{P}$,
the time when this unordered pair of agents will interact is 
% a Poisson process 
an exponential random variable
with rate $\frac{2m}{n-1}$. Thus, we multiply each rate by the conversion factor $\frac{n-1}{2v}$ to get $k_j' = k_j \cdot \frac{n-1}{2v}$. Then we add the ordered transition $(X_{j_1},X_{j_2}) \rxn^{k_j'} (X_{j_3},X_{j_4})$, and if $X_{j_1} \neq X_{j_2}$, also the reverse ordered transition $(X_{j_2},X_{j_1}) \rxn^{k_j'} (X_{j_4},X_{j_3})$.
As a result, each unordered pair $\{v_1, v_2\}$ will interact with rate $\frac{2m}{n-1}$, then do this transition with probability $\frac{k_j'}{m}$. This gives the reaction a total rate of $k_j' \cdot \frac{2}{n-1} = \frac{k_j}{v}$, as desired.

\end{proof}
}

\end{document}